\definecolor{lgray}{gray}{0.70}
\newcommand{\el}{\mathcal{E}}
\newcommand{\E}{\mathbb E}
\newcommand{\prob}{\mathbb P}
\newcommand{\Real}{\mathbb{R}}
\newcommand{\bZ}{\mathbf{Z}}
\newcommand{\bz}{\mathbf{z}}
\newcommand{\gvp}{\varphi}
\newcommand{\Perp}{\perp \! \! \! \perp}
\newcommand{\e}{\epsilon}
\newcommand{\ind}{\text{ind}}
\theoremstyle{definition}
\newtheorem{example}{Example}[section]
\theoremstyle{plain}
\newtheorem{lemma}[example]{Lemma}
\theoremstyle{remark}
\newtheorem{remark}[example]{Remark}
\numberwithin{equation}{section}
\begin{document}
\title{On a Nonparametric Notion of Residual and its Applications}
\author{Rohit Kumar Patra, Bodhisattva Sen\footnote{Supported by NSF CAREER Grant DMS-1150435.}, G\'abor Sz\'ekely}
\affil{Columbia University and National Science Foundation}
\maketitle
\begin{abstract}
\noindent Let $(X, \bZ)$ be a continuous random vector in $\Real  \times \Real^d$, $d \ge 1$. In this paper, we define the notion of a nonparametric residual of $X$ on $\bZ$ that is always independent of the predictor $\bZ$. We study its properties and show that the proposed notion of residual matches with the usual residual (error) in a multivariate normal regression model. Given a random vector $(X, Y, \bZ)$ in $\Real  \times \Real \times \Real^d$, we use this notion of residual to show that the conditional independence between $X$ and $Y$, given $\bZ$, is equivalent to the mutual independence of the residuals (of $X$ on $\bZ$ and $Y$ on $\bZ$) and $\bZ$. This result is used to develop a test for conditional independence. We propose a bootstrap scheme to approximate the critical value of this test. We compare the proposed test, which is easily implementable, with some of the existing procedures through a simulation study.
\end{abstract}
{\bf Keywords:}  Bootstrap; conditional distribution function; energy statistic; one sample multivariate goodness-of-fit test; partial copula; testing conditional independence.

\section{Introduction}
Let $(X, \bZ)$ be a random vector in $\Real \times \Real^d = \Real^{d+1}$, $d \ge 1$. We assume that $(X, \bZ)$ has a joint density on $\Real^{d+1}$. If we want to predict $X$ using $\bZ$ we usually formulate the following regression problem:
\begin{eqnarray}\label{eq:RegMdl}
X = m(\bZ) + \e,
\end{eqnarray}
where $m(\bz) = \E(X|\bZ = \bz)$ is the conditional mean of $X$ given $\bZ = \bz$ and $\e := X - m(\bZ)$ is the {\it residual} (although $\e$ is  usually called the error, and its estimate the residual, for this paper we feel that the term residual is more appropriate). Typically we further assume that the residual $\e$ is {\it independent} of $\bZ$. However, intuitively, we are just trying to break the information in $(X,\bZ)$ into two parts: a part that contains all relevant information about $X$, and the ``residual'' (the left over) which does not have anything to do with the relationship between $X$ and $\bZ$. 

In this paper we address the following question: given any random vector $(X, \bZ)$ how do we define the notion of a ``residual'' of $X$ on $\bZ$ that matches with the above intuition? Thus, formally, we want to find a function $\gvp: \Real^{d+1} \to \Real$ such that the residual $\gvp(X, \bZ)$ satisfies the following two conditions:
\begin{enumerate}
\item[(C.1)] $\;\;\;\;\;$ the residual $\varphi(X, \bZ)$ is independent of the predictor $\bZ$, i.e.,
\begin{eqnarray*}\label{eq:Indep}
\varphi(X, \bZ) \Perp  \bZ, \qquad \mbox{and }
\end{eqnarray*}
\item[(C.2)] $\;\;\;\;\;$ the information content of $(X, \bZ)$ is the same as that of $(  \varphi(X, \bZ), \bZ )$, i.e.,
\begin{equation}\label{eq:Info}
\sigma(X, \bZ) = \sigma(  \varphi(X, \bZ), \bZ ),
\end{equation}
where $\sigma(X, \bZ)$ denotes the $\sigma$-field generated by $X$ and $\bZ$. We can also express~\eqref{eq:Info} as:  there exists a measurable function $h : \Real^{d+1} \to \Real $ such that
\begin{equation}\label{eq:GenX}
X  = h(\bZ, \varphi(X, \bZ));
\end{equation}
see e.g., Theorem 20.1 of~\cite{Bill95}.
\end{enumerate}

In this paper we propose a notion of a residual that satisfies the above two conditions, under any joint distribution of $X$ and $\bZ$. We investigate the properties of this notion of residual in Section~\ref{sec:NPResid}. We show that this notion indeed reduces to the usual residual (error) in the multivariate normal regression model. Further, we use this notion of residual to develop a test for conditional independence.

Suppose now that $(X,Y,\bZ)$ has a joint density   on $\Real \times \Real \times \Real^d = \Real^{d+2}$. The assumption of conditional independence means that $X$ is independent of $Y$ given $\bZ$, i.e., $X \Perp Y |\bZ$. Conditional independence is an important concept in modeling causal relations (\cite{dawid79}, \cite{Pearl00}), in graphical models (\cite{Lauritzen96}; \cite{koller09}), in economic theory (see \cite{Chiappori00}), and in the literature of program evaluations (see \cite{Heckman97}) among other fields. Traditional methods for testing conditional independence are either restricted to the discrete case (\cite{Lauritzen96}; \cite{Agresti02}) or impose simplifying assumption when the random variables are continuous (\cite{Lawrance76}). However, recently there has been a few nonparametric testing procedures proposed for testing conditional independence without assuming a functional form between the distributions of $X,Y$, and $\bZ$. \cite{SuWhite07} consider testing conditional independence based on the difference between the conditional characteristic functions, while \cite{SuWhite08} use the Hellinger distance between conditional densities of $ X$ given $Y$ and $\bZ$, and $X$ given $Y$ to test for conditional independence. A test based on estimation of  the maximal nonlinear conditional correlation is proposed in \cite{Huang10}. \cite{B11} develops  a test based on partial copula.  \cite{KerCondInd07} propose  a measure of conditional dependence of random variables, based on normalized cross-covariance operators on reproducing kernel Hilbert spaces; \cite{Z12} propose another kernel-based conditional independence test. \cite{poczos12} extend the concept of distance correlation (developed by \cite{SzekelyRizzoBakirov07}  to measure dependence between  two random variables or vectors) to characterize conditional dependence.  \cite{SR14} investigate a method that is easy to compute and can capture non-linear dependencies but does not completely characterize conditional independence; also see~\cite{GW12} and the references therein.

In Section~\ref{sec:TestCondInd} we use the notion of residual  defined in Section~\ref{sec:NPResid} to show that the conditional independence between $X$ and $Y$ given $\bZ$ is equivalent to the mutual independence of three random vectors: the residuals of $X$ on $\bZ$ and $Y$ on $\bZ$, and $\bZ$. We reduce this testing of mutual independence to a one sample multivariate  goodness-of-fit test. We further propose a modification of the easy-to-implement \textit{energy} statistic based method (\cite{SzekelyRizzo05}; also see \cite{SzekelyRizzo13}) to test the goodness-of-fit; see Section~\ref{sec:TestMutInd}. In Section~\ref{sec:sub_test_cond} we use our notion of nonparametric residual and the proposed goodness-of-fit test to check the null hypothesis of  conditional independence. Moreover,  we describe a bootstrap scheme to approximate the critical value of this test. 
In Section \ref{sec:simul} we compare the finite sample performance of the procedure proposed in this paper with other available methods in the literature through a finite sample simulation study. We end with a brief discussion, Section~\ref{sec:Disc}, where we point to some open research problems and outline an idea, using the proposed residuals, to define (and test) a nonparametric notion of partial correlation.

\section{A nonparametric notion of residual}\label{sec:NPResid}
Conditions (C.1)--(C.2) do not necessarily lead to a unique choice for $\gvp$. To find a meaningful and unique function $\gvp$ that satisfies conditions (C.1)--(C.2) we impose the following natural restrictions on $\gvp$. We assume that
\begin{enumerate}
\item[(C.3)] $\;\;\;\;\;$ $x \mapsto \gvp(x,\bz)$ is strictly increasing in its support, for every fixed $\bz \in \Real^d$.
\end{enumerate}
Note that condition (C.3) is a slight strengthening of condition (C.2). Suppose that a function $\gvp$ satisfies conditions (C.1) and (C.3). Then any strictly monotone transformation of $\gvp(\cdot, \bz)$ would again satisfy (C.1) and (C.3). Thus,  conditions (C.1) and (C.3) do not uniquely specify $\gvp$. To handle this identifiability issue, we replace condition (C.1) with (C.4), described below.

First observe that, by condition (C.1),  the conditional distribution of the random variable $\gvp(X, \bZ)$ given $\bZ = \bz$  does not depend on $\bz $. We assume that
 \begin{enumerate}
\item[(C.4)] $\;\;\;\;\;$ $\gvp(X, \bZ)| \bZ = \bz$ is uniformly distributed, for all $\bz \in \Real^d$.
\end{enumerate}
Condition (C.4) is again quite natural --  we usually assume that the residual has a fixed distribution, e.g., in regression we assume that the (standardized) residual in normally distributed with zero mean and unit variance. Note that condition (C.4) is slightly stronger than (C.1) and will help us uniquely identify $\gvp$. The following result shows that, indeed, under conditions (C.3)--(C.4), a unique $\gvp$ exists and gives its form.

\begin{lemma}\label{lem:NPError}
Let $F_{X|\bZ}(\cdot| \bz)$ denote the conditional distribution function of $X|\bZ = \bz$. Under conditions (C.3) and (C.4), we have a unique choice of $\gvp(x, \bz)$, given by
\begin{eqnarray*}
\gvp(x, \bz) = F_{X|\bZ}(x| \bz).
\end{eqnarray*}
Also, $h(\bz, u)$ can be taken as
\begin{eqnarray}\label{eq:InvCondDist}
h(\bz, u) =F^{-1}_{X|\bZ}(u|\bz).
\end{eqnarray}
\end{lemma}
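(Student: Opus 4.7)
The plan is to use the conditional probability integral transform. Fix $\bz \in \Real^d$ and suppose $\gvp$ satisfies (C.3) and (C.4). By (C.3), $x \mapsto \gvp(x,\bz)$ is strictly increasing on the support of $X \mid \bZ = \bz$, hence invertible; write $\gvp^{-1}(\cdot, \bz)$ for the inverse in the first argument. Then for any $t$ in the range,
\[
\mathbf{P}\bigl(\gvp(X,\bZ) \le t \,\big|\, \bZ = \bz\bigr)
= \mathbf{P}\bigl(X \le \gvp^{-1}(t,\bz) \,\big|\, \bZ = \bz\bigr)
= F_{X|\bZ}\bigl(\gvp^{-1}(t,\bz) \,\big|\, \bz\bigr).
\]
Condition (C.4) forces the left-hand side to equal $t$ for $t \in [0,1]$, so $F_{X|\bZ}(\gvp^{-1}(t,\bz) \mid \bz) = t$. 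Because $(X, \bZ)$ has a joint density, the conditional CDF $F_{X|\bZ}(\cdot \mid \bz)$ is continuous and strictly increasing on its support, hence itself invertible; inverting gives $\gvp^{-1}(t,\bz) = F^{-1}_{X|\bZ}(t \mid \bz)$, equivalently $\gvp(x,\bz) = F_{X|\bZ}(x \mid \bz)$. This simultaneously yields uniqueness and the claimed formula.

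For the existence half, I would verify directly that the candidate $\gvp(x,\bz) = F_{X|\bZ}(x \mid \bz)$ satisfies (C.3) and (C.4): strict monotonicity in $x$ follows from the density assumption, and the conditional uniformity $F_{X|\bZ}(X \mid \bZ) \mid \bZ = \bz \sim \mathrm{Uniform}[0,1]$ is the standard probability integral transform applied within the conditional distribution. In particular, since (C.4) implies (C.1), we recover the independence property of the residual stated in the paper's introduction.

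Finally, for the representation \eqref{eq:GenX}, set $h(\bz, u) = F^{-1}_{X|\bZ}(u \mid \bz)$. Using the invertibility of $F_{X|\bZ}(\cdot \mid \bz)$ established above,
\[
h\bigl(\bZ, \gvp(X,\bZ)\bigr) = F^{-1}_{X|\bZ}\bigl(F_{X|\bZ}(X \mid \bZ) \,\big|\, \bZ\bigr) = X,
\]
almost surely, as required.

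The only genuine subtlety is justifying that $F_{X|\bZ}(\cdot \mid \bz)$ is strictly increasing and continuous on the relevant support, so that the pointwise inversion step makes sense; this is where the joint density hypothesis on $(X,\bZ)$ is actually used. Everything else is bookkeeping around the probability integral transform.
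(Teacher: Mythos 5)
Your proof is correct and follows essentially the same route as the paper's: apply (C.4) to the conditional law of $\gvp(X,\bZ)$ given $\bZ=\bz$, use (C.3) to rewrite that law via $F_{X|\bZ}(\gvp^{-1}(\cdot,\bz)\mid\bz)$, and invert. The extra verification of existence and the remark on strict monotonicity of $F_{X|\bZ}(\cdot\mid\bz)$ are welcome additions but do not change the argument.
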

\begin{proof} Fix $\bz$ in the support of $\bZ$. Let $u \in (0, 1)$. Let us write $\gvp_\bz(x) =  \gvp(x, \bz)$. By condition (C.4), we have  $\prob[ \gvp(X, \bZ) \le u  | \bZ = \bz  ]  = u$. On the other hand, by (C.3),  $$\prob[ \gvp(X, \bZ) \le u  | \bZ = \bz  ]   =  \prob[ X \le \gvp_\bz^{-1}(u) | \bZ = \bz ] =  F_{X|\bZ}( \gvp_\bz^{-1}(u) | \bz)  .$$  Thus, we have
$$   F_{X|\bZ}( \gvp_\bz^{-1}(u) | \bz) = u, \ \  \mbox{ for all } u \in (0,1),  $$
which is equivalent to $ \gvp_\bz(x) = F_{X|\bZ}(x| \bz)$.

Let $h$ be as defined in~\eqref{eq:InvCondDist}. Then,
$$ h(\bz, \gvp(x, \bz))  = F^{-1}_{X|\bZ}( \gvp(x, \bz)  |\bz)  = F^{-1}_{X|\bZ}( F_{X|\bZ}(x| \bz)  |\bz) = x,       $$
as required.
\end{proof}
Thus from the above lemma, we conclude that in the nonparametric setup, if we want to have a notion of a residual satisfying conditions (C.3)--(C.4) then the residual has to be  $F_{X|\bZ}(X| \bZ)$. The following remarks are in order now.
\begin{remark}
Let us first consider the example when $(X, \bZ)$ follows a multivariate Gaussian distribution, i.e.,
$$ \begin{pmatrix} X \\ \bZ\end{pmatrix} \sim N \left (  \begin{pmatrix} \mu_1 \\ \bm{\mu}_2 \end{pmatrix}, \Sigma :=  \begin{pmatrix} \sigma_{11}& \bm{\sigma}_{12}^\top \\ \bm{\sigma}_{12} & \Sigma_{22}  \end{pmatrix}  \right),  $$ where $\mu_1 \in \Real$, $\mu_2 \in \Real^d$, $\Sigma$ is a $(d+1) \times (d+1)$ positive definite matrix with $\sigma_{11} > 0$, $\sigma_{12} \in \Real^{d \times 1}$ and $\Sigma_{22} \in \Real^{d \times d}$.

Then the conditional distribution of $X$ given $\bZ = \bz$ is  $N(\mu_1 + \bm{\sigma}_{12}^\top \Sigma_{22}^{-1} (\bz - \bm{\mu}_2), \sigma_{11}  - \bm{\sigma}_{12}^\top \Sigma_{22}^{-1} \bm{\sigma}_{12}    )$.
Therefore, we have the following representation in the form of~\eqref{eq:RegMdl}:
$$  X =  \mu_1 + \bm{\sigma}_{12}^\top \Sigma_{22}^{-1} (\bZ - \bm{\mu}_2) +  \Big( X - \mu_1 - \bm{\sigma}_{12}^\top \Sigma_{22}^{-1} (\bZ - \bm{\mu}_2) \Big)  $$
where the usual residual is $X - \mu_1 - \bm{\sigma}_{12}^\top \Sigma_{22}^{-1} (\bZ - \bm{\mu}_2)$, which is known to be independent of $\bZ$. In this case, using Lemma~\ref{lem:NPError}, we get
$$ \gvp(X, \bZ) =  \Phi \left(\frac{ X - \mu_1 - \bm{\sigma}_{12}^\top \Sigma_{22}^{-1} (\bZ - \bm{\mu}_2) }{\sqrt{\sigma_{11}  - \bm{\sigma}_{12}^\top \Sigma_{22}^{-1} \bm{\sigma}_{12}} } \right),$$
where $\Phi(\cdot)$ is the distribution function of the standard normal distribution. Thus  $\gvp(X, \bZ)$ is just a fixed strictly increasing transformation of the usual residual, and the two notions of residual essentially coincide. \\
\end{remark}


\begin{remark}
The above notion of residual does not extend so easily to the case of discrete random variables. Conditions (C.1) and (C.2) are equivalent to the fact that $\sigma(X, \bZ)$ factorizes into two sub $\sigma$-fields as $\sigma(X, \bZ) = \sigma(  \varphi(X, \bZ) ) \otimes \sigma(\bZ )$.  This may not be always possible as can be seen from the following simple example.

Let $(X, Z)$ take values in $\{0, 1\}^2$ such that $\prob[X = i, Z =j] >0$ for all $i, j \in \{0, 1\}$.  Then it can be shown that such a factorization exists if and only if $X$ and  $Z$ are independent, in which case $\gvp(X, Z) = X$. \\
\end{remark}

\begin{remark}
Lemma~\ref{lem:NPError} also gives an way to generate $X$, using $\bZ$ and the residual. We can first generate $\bZ$, following its marginal distribution, and an independent random variable $U \sim \mathcal{U}(0,1)$ (here  $\mathcal{U} (0,1)$ denotes the Uniform distribution on $(0,1)$) which will act as the residual. Then~\eqref{eq:GenX}, where $h$ is defined in~\eqref{eq:InvCondDist}, shows that we can generate $X = F^{-1}_{X|\bZ}(U|\bZ)$. \\
\end{remark}

In practice, we need to estimate the residual $F_{X|\bZ}(X|\bZ)$ from observed data, which can be done both parametrically and non-parametrically. If we have a parametric model for $F_{X|\bZ}(\cdot|\cdot)$, we can estimate the parameters, using e.g., maximum likelihood, etc. If we do not want to assume any structure on $F_{X|\bZ}(\cdot|\cdot)$, we can use any nonparametric smoothing method, e.g., standard kernel methods, for estimation; see~\cite{B11} for such an implementation. We will discuss the estimation of the residuals in more detail in Section~\ref{sec:NPEst}.

\section{Conditional independence}\label{sec:TestCondInd}
Suppose now that $(X,Y,\bZ)$ has a joint density on $\Real \times \Real \times \Real^d = \Real^{d+2}$.
In this section we state a simple result that reduces testing for the conditional independence hypothesis $H_0: X \Perp Y |\bZ$ to a problem of testing mutual independence between three random variables/vectors that involve our notion of residual. We also briefly describe a procedure to test the mutual independence of the three random variables/vectors (see Section~\ref{sec:TestMutInd}). We start with the statement of the crucial lemma.
\begin{lemma}\label{lem:CondInd}
Suppose that $(X,Y,\bZ)$ has a continuous joint density on $\mathbb{R}^{d+2}$. Then, $X \Perp Y |\bZ$ if and only if $F_{X|\bZ}(X|\bZ), F_{Y|\bZ}(Y|\bZ)$ and $\bZ$ are mutually independent.
\end{lemma}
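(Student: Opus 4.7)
Let $U := F_{X|\bZ}(X|\bZ)$ and $V := F_{Y|\bZ}(Y|\bZ)$. My plan is to prove each direction separately, using Lemma~\ref{lem:NPError} as the main tool. The key properties I will repeatedly invoke are: (i) $U \Perp \bZ$ with $U \sim \mathcal{U}(0,1)$, and symmetrically $V \Perp \bZ$ with $V \sim \mathcal{U}(0,1)$; (ii) the representations $X = F^{-1}_{X|\bZ}(U|\bZ)$ and $Y = F^{-1}_{Y|\bZ}(V|\bZ)$, so $X$ is a deterministic function of $(\bZ, U)$ and $Y$ of $(\bZ, V)$.

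For the forward direction, assume $X \Perp Y \mid \bZ$. Since $U$ is a measurable function of $(X,\bZ)$ and $V$ of $(Y,\bZ)$, conditional independence of $X$ and $Y$ given $\bZ$ transfers to give $U \Perp V \mid \bZ$. Combined with the fact that the conditional distributions $U \mid \bZ = \bz$ and $V \mid \bZ = \bz$ are both $\mathcal{U}(0,1)$ (Lemma~\ref{lem:NPError} applied to $X$ on $\bZ$ and $Y$ on $\bZ$, respectively), this yields that the conditional joint law of $(U,V)$ given $\bZ = \bz$ is the product of two independent $\mathcal{U}(0,1)$ measures, for every $\bz$. Since this conditional law does not depend on $\bz$, the pair $(U,V)$ is independent of $\bZ$, and $U,V$ are themselves independent; i.e., $U, V, \bZ$ are mutually independent.

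For the converse, assume $U, V, \bZ$ are mutually independent. In particular $U \Perp V \mid \bZ$. Writing $X = h_1(\bZ, U)$ with $h_1(\bz, u) := F^{-1}_{X|\bZ}(u|\bz)$ and $Y = h_2(\bZ, V)$ with $h_2(\bz, v) := F^{-1}_{Y|\bZ}(v|\bz)$, condition on $\bZ = \bz$: then $X$ becomes the measurable function $h_1(\bz, U)$ of $U$ alone, and $Y$ becomes $h_2(\bz, V)$ of $V$ alone; since $U \Perp V \mid \bZ = \bz$, we conclude $X \Perp Y \mid \bZ = \bz$ for (almost) every $\bz$, which is the definition of $X \Perp Y \mid \bZ$.

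The only delicate point is the conditional-independence-preserving step in each direction: we need that functions computed from $(X,\bZ)$ and $(Y,\bZ)$ inherit conditional independence given $\bZ$ (forward), and that the inversion formula $X = F^{-1}_{X|\bZ}(U|\bZ)$ actually holds jointly (not just marginally in $\bz$). Both are standard but worth stating cleanly; continuity of the joint density guarantees that $F_{X|\bZ}(\cdot|\bz)$ is continuous and strictly increasing on its support for (almost) every $\bz$, so the inverse is well defined and the identity $F^{-1}_{X|\bZ}(F_{X|\bZ}(x|\bz)|\bz) = x$ holds without ambiguity. Everything else is bookkeeping.
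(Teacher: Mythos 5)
Your proof is correct, but it takes a genuinely different route from the paper's. The paper proves this lemma by an explicit change of variables $(X,Y,\bZ)\mapsto(U,V,\bZ)$ at the level of densities: it computes the Jacobian (upper-triangular, with determinant $f_{X|\bZ=\bz}(x)\,f_{Y|\bZ=\bz}(y)$), writes down the joint density $f_{(U,V,\bZ)}(u,v,\bz) = f_{(X,Y)|\bZ=\bz}(x,y)f_{\bZ}(\bz)/\bigl(f_{X|\bZ=\bz}(x)f_{Y|\bZ=\bz}(y)\bigr)$, and reads off both implications from whether this expression collapses to $f_{\bZ}(\bz)$ on $(0,1)^2\times\Real^d$. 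You instead argue at the level of conditional distributions: forward, you push conditional independence through the measurable maps $(X,\bZ)\mapsto U$ and $(Y,\bZ)\mapsto V$ and use the probability integral transform to identify the conditional law of $(U,V)$ given $\bZ=\bz$ as a fixed product of uniforms, hence full mutual independence; backward, you invert via $X=F^{-1}_{X|\bZ}(U|\bZ)$, $Y=F^{-1}_{Y|\bZ}(V|\bZ)$ and transfer conditional independence back. Your version is measure-theoretically cleaner and slightly more general --- it needs only continuity (not absolute continuity) of the conditional distribution functions, and it sidesteps any regularity worries about the density quotient --- while the paper's computation is more concrete and has the side benefit of exhibiting the explicit density of $(U,V,\bZ)$, which is the object the subsequent goodness-of-fit machinery is built on. The two delicate points you flag (inheritance of conditional independence under measurable maps of $(X,\bZ)$ and $(Y,\bZ)$, and the a.s.\ validity of the inversion identity on flat stretches of $F_{X|\bZ}(\cdot|\bz)$) are exactly the right ones to worry about, and both are handled correctly; a fully written-out version should state them as one-line claims with the standard justifications, but there is no gap.
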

\begin{proof}
Let us make the following change of variable $$ (X,Y,\bZ) \mapsto (U,V,\bZ) := (F_{X|\bZ}(X), F_{Y|\bZ}(Y), \bZ).$$ The joint density of $(U,V,\bZ)$ can be expressed as
\begin{equation}\label{eq:trans}
f_{(U,V,\bZ)}(u,v,\bz) = \frac{f(x,y,\bz)}{f_{X|\bZ=\bz}(x) f_{Y|\bZ=\bz}(y)}  =  \frac{f_{(X, Y)|\bZ=\bz}(x, y)f_\bZ(\bz)}{f_{X|\bZ=\bz}(x) f_{Y|\bZ=\bz}(y)},
\end{equation}
where $x = F_{X|\bZ=\bz}^{-1}(u)$, and $y = F_{Y|\bZ=\bz}^{-1}(v)$. Note that as the Jacobian matrix is upper-triangular, the determinant is the product of the diagonal entries of the matrix, namely, $f_{X|\bZ = \bz}(x)$, $f_{Y|\bZ=\bz}(y)$ and $1$.

If $X \Perp Y |\bZ$ then $f_{(U,V,\bZ)}(u,v,\bz)$ reduces to just $f_\bZ(\bz)$, for $u, v \in (0,1)$, from the definition of conditional independence, which shows that $U,V,\bZ$ are independent (note that it is easy to show that $U,V$ are marginally $\mathcal{U}(0,1)$, the Uniform distribution on $(0,1)$). Now, given that $U,V,\bZ$ are independent, we know that $f_{(U,V,\bZ)}(u,v,\bz) = f_\bZ(\bz)$ for $u, v \in (0,1)$, which from (\ref{eq:trans}) easily shows that $X \Perp Y |\bZ$.
\end{proof}
$\vspace{0.000in}$

\begin{remark}\label{rem:berg}
Note that the joint distribution of $F_{X|\bZ}(X|\bZ)$ and $F_{Y|\bZ}(Y|\bZ)$ is known as the \textit{partial copula}; see e.g.,~\cite{B11}.~\cite{B11} developed a test for conditional independence by testing mutual independence between $F_{X|\bZ}(X|\bZ)$ and  $F_{Y|\bZ}(Y|\bZ)$. However, as the following example illustrates, the independence of $F_{X|\bZ}(X|\bZ)$ and  $F_{Y|\bZ}(Y|\bZ)$ is not enough to guarantee that $X \Perp Y |\bZ$. Let  $W_1, W_2, W_3$ be i.i.d.~$\mathcal{U}(0,1)$ random variables.  Let $X = W_1+W_3$, $Y =W_2$ and $Z = \mathrm{mod}(W_1 + W_2, 1)$, where `$\mathrm{mod}$' stands for the modulo (sometimes called modulus) operation that finds the remainder of the division $W_1 + W_2$ by 1. Clearly, the random vector $(X, Y, Z)$  has a smooth continuous density on $[0,1]^3$. Note that $Z$ is independent of $W_i$, for $i = 1,2$.  Hence, $X, Y$ and $Z$  are pairwise independent.  Thus, $F_{X|\bZ}(X) = F_X(X)$ and $F_{Y|\bZ}(X) = F_Y(Y)$, where $F_X$ and $F_Y$ are the marginal distribution functions of $X$ and $Y$, respectively. From the independence of $X$ and $Y$, $F_X(X)$ and $F_Y(Y)$ are independent. On the other hand, the value of $W_1$ is clearly determined by $Y$ and $Z$, i.e., $W_1 = Z-Y$ if $Y \le Z$ and $W_1 = Z-Y+1$ if $Y>Z$. Consequently,  $X$ and $Y$ are not conditionally independent given $Z$. To see this, note that for every $z \in (0,1)$, $$\E[ X| Y, Z=z ] =     \left\{ \begin{array}{ll}
         z-Y + 0.5 & \mbox{if $Y \le z$}\\
         z - Y +1 + 0.5& \mbox{if $Y > z$,}\end{array} \right.$$ which obviously depends on $Y$. In Remark~\ref{Bergsma2} we illustrate this behavior with a finite sample simulation study. \\
\end{remark}

\begin{remark} We can extend the above result to the case when $X$ and $Y$ are random vectors in $\Real^p$ and $\Real^q$, respectively. In that case we define the conditional multivariate distribution transform $F_{X|\bZ}$ by successively conditioning on the co-ordinate random variables, i.e., if $X = (X_1,X_2)$ then we can define $F_{X|\bZ}$ as $(F_{X_2|X_1,\bZ},  F_{X_1|\bZ})$. With this definition, Lemma~\ref{lem:CondInd} still holds.  \\
\end{remark}

To use Lemma~\ref{lem:CondInd} to test the conditional independence between $X$ and $Y$ given $\bZ$, we need to first estimate the residuals $F_{X|\bZ}(X|\bZ)$ and  $F_{Y|\bZ}(Y|\bZ)$ from observed data, which can be done by any nonparametric smoothing procedure, e.g., standard kernel methods (see Section~\ref{sec:NPEst}). Then, any procedure for testing the mutual independence of $F_{X|\bZ}(X|\bZ), F_{Y|\bZ}(Y|\bZ)$ and $\bZ$ can be used. In this paper we advocate the use of the {\it energy} statistic (see \cite{RizzoSzekely10}), described briefly in the next subsection, to test the mutual independence of three or more random variables/vectors.

\subsection{Testing mutual independence of three or more random vectors with known marginals}\label{sec:TestMutInd}
Testing independence of two random variables (or vectors) has received much recent attention in the statistical literature; see e.g.,~\cite{SzekelyRizzoBakirov07}, \cite{KerIndepALT05}, and the references therein. However, testing the mutual independence of three or more random variables is more complicated and we could not find any easily implementable method in the statistical literature.

In this sub-section, we test the mutual independence of three or more random variables (vectors) with known marginals by converting the problem to a one-sample goodness-of-fit test for multivariate normality. In the following we briefly describe our procedure in the general setup.

Suppose that we have $r \ge 3$ continuous random variables (or vectors) $V_1, \ldots, V_r$ and we want to test their mutual independence. We assume that we know the marginal distributions of $V_1, \ldots, V_r$; without loss of generality, we can assume that $V_i$'s are standard Gaussian random variables (vectors). We write $T:= (V_1, V_2, \ldots, V_r) \in \Real^k$ and introduce $T_{\ind} := (V_1^*, V_2^*, \ldots, V_r^*)$ where $V_j^*$ is an i.i.d.~copy of $V_j$, $j=1,2, \ldots, r$, but in $T_{\ind}$ the coordinates, $V_1^*, V_2^*, \ldots, V_r^*$, are independent. To test the mutual independence of $V_1, V_2, \ldots, V_r$ all we need to test now is whether $T$ and $T_{\ind}$ are identically distributed. If we observed a sample from $T$, we can test for the equality of distributions of $T$ and $T_{\ind}$ through a one-sample goodness-of-fit test for the standard multivariate normal distribution, i.e., $$H_0: T \sim N(\textbf{0},\textbf{I}_{k\times k}),$$ as $T_{\ind}\sim N(\textbf{0},\textbf{I}_{k\times k})$, where $\textbf{I}_{k \times k}$ is the identity matrix of order $k$ and  $\textbf{0} := (0, \ldots, 0) \in \Real^{k}.$


In this paper we consider the following {\it energy} statistic (see~\cite{SzekelyRizzo05} and \cite{RizzoSzekely10})
\begin{equation}\label{eq:EStat}
\Lambda(T) = 2 \E \|T - T_{\ind}\| - \E \|T - T'\| - \E \|T_{\ind} - T_{\ind}'\|,
\end{equation}
where $T'$ and $T_{\ind}'$ are i.i.d.~copies of $T$ and $T_{\ind}$, respectively ($\|\cdot\|$ denotes the Euclidean norm). Note that $\Lambda(T)$ is always nonnegative, and equals 0, if and only if  $T$ and $T_{\ind}$ are identically distributed, i.e., if and only if $V_1, V_2, \ldots, V_r$ are mutually independent (see Corollary 1 of~\cite{SzekelyRizzo05}).

  Suppose now that we observe $n$  i.i.d.~samples $T_1, \ldots, T_n$ of $T$. The (scaled)  sample version of the energy statistic for testing the goodness-of-fit hypothesis is
  \begin{equation}\label{eq:teststat}
\mathcal{E}_n(T_1,\ldots, T_n) :=2 \sum_{i=1}^n  \E \|T_i-T_\ind\|  - \frac{1}{n} \sum_{i=1}^n\sum_{j=1}^{n} \|T_i-T_j\|- n \E \|T_\ind-T^\prime_\ind\|.
  \end{equation}
Note that the first expectation in the above display is with respect to $T_\ind$. Under the null hypothesis of mutual independence, the test statistic $\mathcal{E}_n(T_1,\ldots, T_n)$ has a limiting distribution, as $n \rightarrow \infty,$ while under the alternative hypothesis $\mathcal{E}_n(T_1,\ldots, T_n)$ tends to infinity; see Section 4 of \cite{SzekelyRizzo05} and Section 8 of \cite{SzekelyRizzo13} for detailed discussions. Thus any test that rejects the null for large values of  $\mathcal{E}_n(T_1,\ldots, T_n)$ is consistent against general alternatives.

 As   $T_\ind$ and $T^\prime_\ind$ are  i.i.d.~$N(\textbf{0}, \textbf{I}_{k\times k})$ random variables.  The statistic $\mathcal{E}_n(T_1,\ldots, T_n)$ is easy to compute:
  $$\E\|T_\ind-T_\ind^\prime\| =\sqrt{2}\E \|T_{ind}\|= 2 \frac{\Gamma \big(\frac{d+3}{2}\big)}{\Gamma \big( \frac{d+2}{2}\big)}$$ and for any $a\in \Real^{d+2}$, we have
  $$\E\|a-T_\ind\| =\frac{\sqrt{2}\Gamma \big(\frac{d+3}{2}\big)}{\Gamma \big( \frac{d+2}{2}\big)} + \sqrt{\frac{2}{\pi}} \sum_{k=0}^\infty \frac{(-1)^k}{k!\, 2^k} \frac{|a|^{2k+2}}{(2k+1)(2k+2)} \frac{\Gamma \big( \frac{d+3}{2}\big)\Gamma \big( k+\frac{3}{2}\big)}{\Gamma \big( k+\frac{d}{2}+2\big)}.$$

 The expression for $\E\|a-T_\ind\|$ follows from the discussion in \cite{Zacks81} (see page 55). See the
source code ``energy.c'' in  the \textit{energy} package  of R language (\cite{Rlang}) for a fast implementation of this; also see \cite{SzekelyRizzo13}.

\subsection{Testing conditional independence} \label{sec:sub_test_cond}
In this sub-section we use Lemma \ref{lem:CondInd} and the test for mutual independence proposed in the  previous sub-section (Section~\ref{sec:TestMutInd}) to test for the conditional independence of $X$ and $Y$ given $\bZ.$ We start with a simple lemma.


\begin{lemma} \label{lem:CondIndeqiv}
Suppose that $(X,Y,\bZ)$ has a continuous joint density on $\mathbb{R}^{d+2}$. Then $X \Perp Y |\bZ$ if and only if $$W:=(F_{X|\bZ}(X|\bZ), F_{Y|\bZ}(Y|\bZ), F_\bZ(\bZ)) \sim \mathcal{U}([0,1]^{d+2}),$$ where $F_\bZ(\bz) = \left(F_{Z_d|Z_{d-1},\ldots, Z_1}(z_d|z_{d-1},\ldots, z_1), \ldots, F_{Z_2|Z_1}(z_2|z_1), F_{Z_1}(z_1)\right),$ $\bZ =$ \\$ (Z_1,\ldots, Z_d),$ $\textbf{z}=(z_1,\ldots, z_d),$ and $\mathcal{U}([0,1]^{d+2})$ denote the Uniform distribution on $[0,1]^{d+2}$.
\end{lemma}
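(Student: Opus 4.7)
The plan is to combine Lemma~\ref{lem:CondInd} with the classical Rosenblatt transform of $\bZ$. By Lemma~\ref{lem:CondInd}, the condition $X \Perp Y | \bZ$ is equivalent to the mutual independence of $U := F_{X|\bZ}(X|\bZ)$, $V := F_{Y|\bZ}(Y|\bZ)$, and $\bZ$. Moreover Lemma~\ref{lem:NPError} (applied coordinatewise to $\bZ$ and to $X, Y$ given $\bZ$) tells us that each of $U$, $V$, and each component of $F_\bZ(\bZ)$ is marginally $\mathcal{U}(0,1)$. So the task reduces to replacing $\bZ$ by its Rosenblatt transform $\bW := F_\bZ(\bZ)$ and identifying joint uniformity on the unit cube with mutual independence plus uniform marginals.

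First I would establish that $\bW \sim \mathcal{U}([0,1]^d)$ and that the map $\bz \mapsto F_\bZ(\bz)$ is a measurable bijection from the support of $\bZ$ onto $(0,1)^d$ with a measurable inverse. This is a standard inductive argument: writing $\bW = (W_1, \ldots, W_d)$ with $W_k := F_{Z_k | Z_{k-1}, \ldots, Z_1}(Z_k | Z_{k-1}, \ldots, Z_1)$, the continuity of the joint density of $\bZ$ ensures that, conditional on $(Z_1, \ldots, Z_{k-1})$, $W_k$ is $\mathcal{U}(0,1)$ and is independent of $(Z_1, \ldots, Z_{k-1})$, hence of $(W_1, \ldots, W_{k-1})$. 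Induction then yields $\bW \sim \mathcal{U}([0,1]^d)$, and the strict monotonicity of each conditional distribution function (in its last argument) gives measurable invertibility. In particular, $\sigma(\bW) = \sigma(\bZ)$.

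Using $\sigma(\bW) = \sigma(\bZ)$, mutual independence of $(U, V, \bZ)$ is equivalent to mutual independence of $(U, V, \bW)$. Since $U, V, \bW$ all have uniform marginals (on $(0,1)$ and $(0,1)^d$, respectively), their mutual independence is equivalent to the joint vector $W = (U, V, \bW)$ being $\mathcal{U}([0,1]^{d+2})$. Combining this with Lemma~\ref{lem:CondInd} yields the stated equivalence in both directions.

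The only step with any subtlety is verifying that the iterated Rosenblatt transform is a measurable bijection with uniform joint law; everything else is a bookkeeping application of Lemma~\ref{lem:CondInd} and the elementary fact that a random vector with prescribed uniform marginals is uniform on the cube if and only if its coordinates are mutually independent. Under our continuous-density assumption the inductive construction of $\bW$ is routine, so I do not anticipate any real obstacle.
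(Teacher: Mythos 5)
Your proposal is correct and follows essentially the same route as the paper: reduce to mutual independence of $F_{X|\bZ}(X|\bZ)$, $F_{Y|\bZ}(Y|\bZ)$, and $\bZ$ via Lemma~\ref{lem:CondInd}, observe the uniform marginals, and show that the iterated (Rosenblatt) transform $F_\bZ(\bZ)$ is uniform on $[0,1]^d$ by an inductive conditioning argument. If anything, your explicit remark that $\sigma(F_\bZ(\bZ)) = \sigma(\bZ)$ (so that $\bZ$ may be replaced by $F_\bZ(\bZ)$ in the mutual-independence statement in \emph{both} directions) makes a step fully explicit that the paper's proof leaves implicit.
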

\begin{proof}
Note that by Lemma~\ref{lem:CondInd},   $X \Perp Y |\bZ$ if and only if $F_{X|\bZ}(X|\bZ),$ $F_{Y|\bZ}(Y|\bZ)$ and $\bZ$ are mutually independent. Furthermore, note that  $F_{X|\bZ}(X|\bZ),$ $F_{Y|\bZ}(Y|\bZ)$ are i.i.d.~$\mathcal{U}(0,1)$ random variables. Thus the proof of the lemma will be complete if we  show that $F_\bZ(\bZ) \sim \mathcal{U}([0,1]^d)$.

As  each of $F_{Z_d|Z_{d-1},\ldots, Z_1}(Z_d|Z_{d-1},\ldots, Z_1), \ldots, F_{Z_2|Z_1}(Z_2|Z_1),$ and  $F_{Z_1}(Z_1)$ are $\mathcal{U}(0,1)$ random variables, it is enough to show that they are mutually independent. For simplicity of notation, we will only prove the independence of $F_{Z_2|Z_1}(Z_2|Z_1)$ and  $F_{Z_1}(Z_1)$, independence of other terms can be proved similarly. Note that
\begin{align*}
\prob(F_{Z_2|Z_1}(Z_2|Z_1) \le z_2 | F_{Z_1}(Z_1)=z_1) ={}& \prob(F_{Z_2|Z_1}(Z_2|Z_1) \le z_2 | Z_1=F_{Z_1}^{-1}(z_1))\\
={}&\prob\Big(Z_2 \le F_{Z_2|Z_1}^{-1}\big(z_2| F_{Z_1}^{-1}(z_1)\big) \Big| Z_1=F_{Z_1}^{-1}(z_1)\Big)\\
={}&F_{Z_2|Z_1} \Big(F_{Z_2|Z_1}^{-1}\big(z_2| F_{Z_1}^{-1}(z_1)\big) |F_{Z_1}^{-1}(z_1)\Big)\\
={}&z_2.
\end{align*}
As the conditional distribution of $F_{Z_2|Z_1}(Z_2|Z_1)$ given $ F_{Z_1}(Z_1) = z_1$ does not depend on $z_1$, we have  that  $F_{Z_2|Z_1}(Z_2|Z_1)$ and  $F_{Z_1}(Z_1)$ are independent.
\end{proof}

Let us now assume  $X \Perp Y |\bZ$ and  define
\begin{equation*} \label{eq:T_def}
W:=\left(F_{X|\bZ}(X|\bZ), F_{Y|\bZ}(Y|\bZ), F_{Z_d|\bZ_{-d}}(Z_d|\bZ_{-d}), \ldots, F_{Z_2|Z_1}(Z_2|Z_1), F_{Z_1}(Z_1)\right).
\end{equation*}
By Lemma~\ref{lem:CondIndeqiv}, we have
\begin{equation*} \label{eq:eq_dist}
W\stackrel{\mathcal D}{=} (U_1, \dots, U_{d+2}),
\end{equation*}
where $U_1, U_2, \ldots,  U_{d+2}$ are i.i.d.~$\mathcal{U}(0,1)$ random variables.  An equivalent formulation is
\begin{equation} \label{eq:mvn}
H_0: T:= \Phi^{-1} (W) \stackrel{\mathcal D}{=} N(\textbf{0}, \textbf{I}_{(d+2) \times  (d+2)}),
\end{equation}
 where $\Phi$ is the distribution function corresponding to the standard Gaussian random variable, and for any $\textbf{a} \in \Real^{d+2}$, $\Phi^{-1} (\textbf{a}) := (\Phi^{-1}(a_1), \ldots, \Phi^{-1}(a_{d+2})).$

 We observe i.i.d.~data $\{(X_i,Y_i,\bZ_i): i = 1,\ldots, n\}$ from the joint distribution of $(X,Y,\bZ)$ and we are interested in testing $X \Perp Y |\bZ$. Suppose first that the distribution functions $F_{X| \bZ}(\cdot|\cdot),  F_{Y| \bZ}(\cdot|\cdot),$ and  $F_{\bZ}(\cdot)$ are known.  Then we have an i.i.d.~sample $T_1,\ldots, T_n$ from $T$, where
\begin{equation} \label{eq:data_ver}
 T_i:=\Phi^{-1}(F_{X|\bZ}(X_i|\bZ_i), F_{Y|\bZ}(Y_i|\bZ_i), F_{\bZ}(\bZ_i)).
\end{equation}
Now we can use the the test statistic \eqref{eq:teststat} to  test the hypothesis of conditional independence.

 As the true conditional distribution functions $F_{X| \bZ},  F_{Y| \bZ},$ and  $F_{\bZ}$ are unknown, we can replace them by their estimates $\widehat F_{X|\bZ}, \widehat F_{Y|\bZ},$  and $\widehat F_{\bZ}$, respectively, where $\widehat F_\bZ (\bz) =\left( \widehat F_{Z_d|Z_{d-1},\ldots, Z_1}(z_d|z_{d-1},\ldots, z_1), \ldots,\widehat F_{Z_2|Z_1}(z_2|z_1), \widehat F_{Z_1}(z_1)\right)$; see Section \ref{sec:NPEst} for more details on how to compute these estimates. Let us now define
 \begin{equation} \label{eq:data_hat_ver}
 \widehat T_i:=\Phi^{-1}(\widehat F_{X|\bZ}(X_i|\bZ_i), \widehat F_{Y|\bZ}(Y_i|\bZ_i), \widehat F_{\bZ}(\bZ_i)),
 \end{equation}
 for $i = 1, 2,\ldots, n.$  We will use
 \begin{equation} \label{eq:en_hat}
\widehat{\mathcal{E}_n}:=  \mathcal{E}_n(\hat{T}_1, \ldots \hat{T}_n)
    \end{equation} to test the hypothesis of conditional independence.
\subsubsection{Approximating the asymptotic distribution through bootstrap}

The limiting behavior of $\el_n$ is not very useful in computing the critical value of the test statistic $\widehat{\mathcal{E}_n}$ proposed in the the previous sub-section. In a related but slightly different problem studied in~\cite{sen14}, it was shown that, the analogous versions of $\el_n$ and $\widehat{\mathcal{E}_n}$  have very different limiting distributions.

In independence testing problems it is quite standard and natural to approximate the critical value of the test, under $H_0$, by using a permutation test; see e.g.,~\cite{SzekelyRizzo09}, \cite{gretton07}. However, in our problem as we use $\hat{T}_i$ instead of $T_i$, the permutation test is not valid; see~\cite{sen14}.

In this sub-section, we propose a bootstrap procedure to approximate the distribution of $\widehat{\mathcal{E}_n}$, under the null hypothesis of conditional independence.  We now describe the bootstrap procedure.
%
%
%
%
%
%
%
%
%
%
Let $\mathbb{P}_{n,\bZ}$ be the empirical distribution of $\bZ_1, \ldots,\bZ_n$.
\begin{enumerate}[label=\bfseries Step \arabic*:]

\item 	 Generate an i.i.d.~sample $\{U_{i,1}^*, U_{i,2}^*, \bZ^*_{n,i}\}_{ 1 \le i \le n}$ of size $n$ from the measure $\mathcal{U}(0,1) \times \mathcal{U}(0,1) \times \mathbb{P}_{n,\bZ}$; recall that $\mathcal{U}(0,1)$ denotes the Uniform distribution on $(0,1).$

\item 	The bootstrap sample is then $\{X^*_{n,1}, Y^*_{n,1}, \bZ^*_{n,1}\}_{ 1 \le i \le n},$ where
\begin{equation}
X^*_{n,i} := \widehat{F}^{-1}_{X|Z}(U_{i,1}^*|\bZ_{n,1}^*) \qquad \text{and} \qquad Y^*_{n,i} := \widehat{F}^{-1}_{Y|Z}(U_{i,2}^*|\bZ_{n,1}^*).
\end{equation}

\item Use the bootstrap sample $\{X^*_{n,i}, Y^*_{n,i}, \bZ^*_{n,i}\}_{ 1 \le i \le n}$ to get smooth estimators $\widehat F^*_{X|\bZ}, \widehat F^*_{Y|\bZ},$  and $\widehat F^*_{\bZ}$ of $F_{X| \bZ},  F_{Y| \bZ},$ and  $F_{\bZ}$; see Section \ref{sec:NPEst} for a discussion on smooth estimation of the conditional distribution functions.

\item	Compute the bootstrap test statistic  $\mathcal{E}^*_n:= \mathcal{E}_n(\widehat{T}^*_1, \ldots, \widehat{T}^*_n) $ where
{\small \begin{equation}
\widehat{T}^*_i= \Phi^{-1} \big(\widehat F^*_{X|\bZ}(X^*_{n,i}|\bZ_{n,i}^*), \widehat F^*_{Y|\bZ}(Y^*_{n,i}|\bZ^*_{n,i}),  \widehat F^*_{\bZ}(\bZ^*_{n,i})).
\end{equation} }
\end{enumerate}
We can now approximate the distribution of $\widehat{\mathcal{E}_n}$ by the conditional distribution of $\mathcal{E}_n^*$ given the data $\{X_i, Y_i,\bZ_i\}_{ 1 \le i \le n}.$
In Section \ref{sec:simul} we study  the finite sample performance of the above procedure through a  simulation study and illustrate that our procedure indeed yields a valid test for conditional independence.

\begin{remark}
	In steps 1 and 2 above, we generate the bootstrap sample from the approximated joint distribution of $(X, Y, \bZ)$ under the null hypothesis of conditional independence. In steps 3 and 4 we mimic the evaluation of the test statistic $\widehat{\mathcal{E}_n}$ using the bootstrap sample. This is an example of a model based bootstrap procedure.~\cite{sen14} prove the consistency of a similar bootstrap procedure in a related problem.  As the sample size increases the approximated joint distribution of $(X, Y, \bZ)$ (under $H_0$) would converge to the truth and the bootstrap distribution would replicate the distribution of $\widehat{\mathcal{E}_n}$.
\end{remark}

\subsection{Nonparametric estimation of the residuals}\label{sec:NPEst}
In this sub-section we discuss procedures to nonparametrically estimate  $ F_{X| \bZ},  F_{Y| \bZ},$ and  $F_{\bZ}$ given data $\{X_i, Y_i,\bZ_i\}_{ 1 \le i \le n}.$ The nonparametric estimation of the conditional distribution functions would involve smoothing. In the following we briefly describe the standard approach to estimating the conditional distribution functions using  kernel smoothing techniques (also see~\cite{LeeLeePark06}, \cite{YuJones98}, and \cite{HallWolffYao99}). For notational simplicity, we restrict to the case $d=1$, i.e., $\bZ$ is a real-valued random variable. Given an i.i.d.~sample of $\{(X_i,Z_i): i = 1,\ldots, n\}$ from $f_{X,Z}$, the joint density of $(X,Z)$, we can use the following kernel density estimator of $f_{X,Z}$: $$ \widehat f_n(x,z) =  \frac{1}{n h_{1,n} h_{2,n}} \sum_{i=1}^n k \left( \frac{x - X_i}{h_{1,n}} \right) k \left( \frac{z - Z_i}{h_{2,n}} \right)$$ where $k$ is a symmetric probability density on $\mathbb{R}$ (e.g., the standard normal density function), and $h_{i,n}, i=1,2$, are the smoothing bandwidths. It can be shown that if $n h_{1,n} h_{2,n} \rightarrow \infty$ and $\max\{h_{1,n}, h_{2,n}\} \rightarrow 0$, as $n \rightarrow \infty,$ then $\widehat f_n(x,z) \stackrel{P}{\rightarrow} f_{X,Z}(x,z)$. In fact, the theoretical properties of the above kernel density estimator are very well studied; see e.g., \cite{FG96} and \cite{EM05} and the references therein. For the convenience of notation, we will write $h_{i,n}$ as $h_i$, $i=1,2$.

The conditional density of $X$ given $Z$ can then be estimated by $$\widehat f_{X|Z}(x|z) = \frac{\widehat f_n(x,z)}{\widehat f_Z(z)} = \frac{\frac{1}{n h_{1} h_{2}} \sum_{i=1}^n k \left( \frac{x - X_i}{h_{1}} \right) k \left( \frac{z - Z_i}{h_{2}} \right)}{\frac{1}{n h_{2}} \sum_{i=1}^n k \left( \frac{z - Z_i}{h_{2}} \right)}.$$
Thus the conditional distribution function of $X$ given $Z$ can be estimated as $$ \widehat F_{X|Z}(x|z) = \frac{\int_{-\infty}^ x \widehat f_n(t,z) \; dt}{\widehat f_Z(z)} = \frac{\frac{1}{n h_{2}} \sum_{i=1}^n K \left( \frac{x - X_i}{h_{1}} \right) k \left( \frac{z - Z_i}{h_{2}} \right)}{\frac{1}{n h_{2}} \sum_{i=1}^n k \left( \frac{z - Z_i}{h_{2}} \right)} = \sum_{i=1}^n  w_i(z) K \left( \frac{x - X_i}{h_{1}} \right) $$ where $K$ is the distribution function corresponding to $k$ (i.e., $K(u)  = \int_{-\infty}^u k(v) \; dv$) and $w_i(z) = \frac{\frac{1}{n h_{2}} k \left( \frac{z - Z_i}{h_{2}} \right)}{\frac{1}{n h_{2}} \sum_{j=1}^n k \left( \frac{z - Z_j}{h_{2}} \right)}$ are weights that sum to one for every $z$.  Least square cross-validation method proposed in \cite{hall2004cross} can be used to find the optimal choices for $h_1$ and $h_2.$  For general $d$, the optimal parameters must satisfy $h_1 \sim n^{-2/(d+4)}$ and $h_2 \sim n^{-1/(d+4)};$ see  Section 6.2 of \cite{LiRacine07}  and \cite{lilira13} for a thorough discussion.

\begin{remark}\label{Bergsma2} Now we  provide empirical evidence  for the failure of the test proposed in~\cite{B11} in the example discussed in Remark~\ref{rem:berg}. We plot (see Figure~\ref{fig:berg}) the histogram of $p$-values obtained from the proposed test (see Section~\ref{sec:sub_test_cond}) and that of the $p$-values obtained from testing the independence of  $F_{X|\bZ}(X|\bZ)$ and $F_{Y|\bZ}(Y|\bZ)$ (using their estimates $\widehat F_{X|\bZ}(\cdot|\cdot)$ and $\widehat F_{Y|\bZ}(\cdot|\cdot)$). We use the distance covariance test statistic (see \citet{SzekelyRizzoBakirov07}) to test for the independence of $F_{X|\bZ}(X|\bZ)$ and  $F_{Y|\bZ}(Y|\bZ)$. Figure~\ref{fig:berg} demonstrates that a test for  mutual independence of $F_{X|\bZ}(X|\bZ)$ and  $F_{Y|\bZ}(Y|\bZ)$ can fail to capture the conditional dependence between $X$ and $Y$ given $\bZ$.
\end{remark}
\begin{figure}[h!]
\includegraphics[scale=.8]{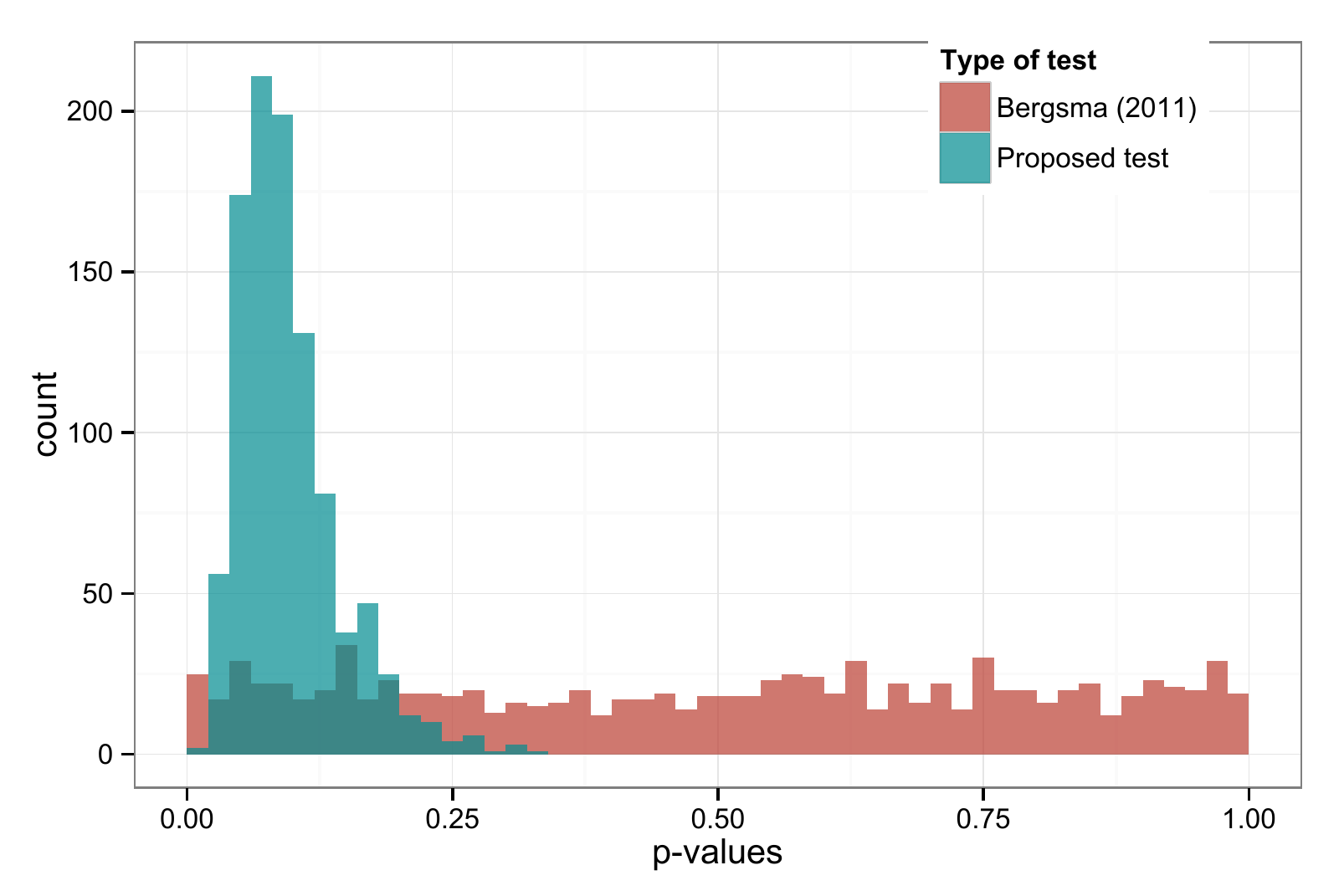}
\caption{Histograms of $p$-values (estimated using 1000 bootstrap samples) over $1000$ independent replications.  Here, for  $i=1,\ldots,200$, $\{X_i,Y_i,Z_i\}$ are i.i.d.~samples from the example discussed in Remark \ref{rem:berg}.}
\label{fig:berg} 
\end{figure}

\section{Simulation}\label{sec:simul}
We now investigate the finite sample performance of the testing procedure developed in this paper through a simulation study. We also compare the performance of the our testing procedure to those  proposed in \cite{KerCondInd07} and \cite{Z12}. We denote the the testing procedure proposed in \cite{KerCondInd07} by $CI_{perm}$ and use $KCI$ to denote the kernel based conditional independence test proposed in \cite{Z12}.

To  illustrate  and compare the performance of different testing procedures, we consider  the following sampling scenario borrowed from \cite{Z12}.  Let us assume that  $X$ and $Y$ are only dependent on $Z_1$ (the first coordinate of $\bZ$) and that all other conditioning variables are independent of $X,Y,$ and $Z_1.$ We assume that  $\bZ \sim N_d(\textbf{0}, \sigma^2_z \textbf{I}_{d\times d})$, $X:= W+ Z_1+ \epsilon,$ and $Y:= W+ Z_1+ \epsilon^\prime,$ where $\epsilon, \epsilon^\prime,$  and $W$ are three independent mean zero Gaussian random variables. Moreover, we assume that $\epsilon, \epsilon^\prime,$  and $W$ are independent of $\bZ,$ $var(\epsilon)=var(\epsilon^\prime)=\sigma^2_E,$ and $var(W)=
\sigma^2_W,$ where for any real random variable $V$,  $var(V)$ denotes its variance.  Note that $X \Perp Y |\bZ$  if and only if $\sigma_W=0.$

  In our finite sample simulations  we fixed $\sigma_E= 0.3 $ and $\sigma_z=0.2$.  We  generate $500$ i.i.d.~samples $\{X_i, Y_i, \bZ_i\}_{1 \le i \le 500}$ for each of $d=1, 3,$ and $5$ and for different values of $\sigma_W.$ For each such sample, we use 1000 bootstrap replicates to estimate the $p$-value of the proposed test procedure. We have used the ``\texttt{np}" (see \cite{np}) package in R  (\cite{R}) to estimate the conditional distribution functions with the tuning parameters chosen using least-squares cross validation (see Section~\ref{sec:NPEst}). In Figure \ref{fig:power_curve} we plot the  power (estimated using 500 independent experiments) of the testing procedure proposed in Section \ref{sec:sub_test_cond} along with those of  $CI_{perm}$ and $KCI$  as $\sigma_W$ increases from $0$ to $0.25$, for dimensions $1, 3,$ and $5$. We fix the significance level at $0.05$.

\begin{figure}[h!]
\includegraphics[width=.65\paperwidth]{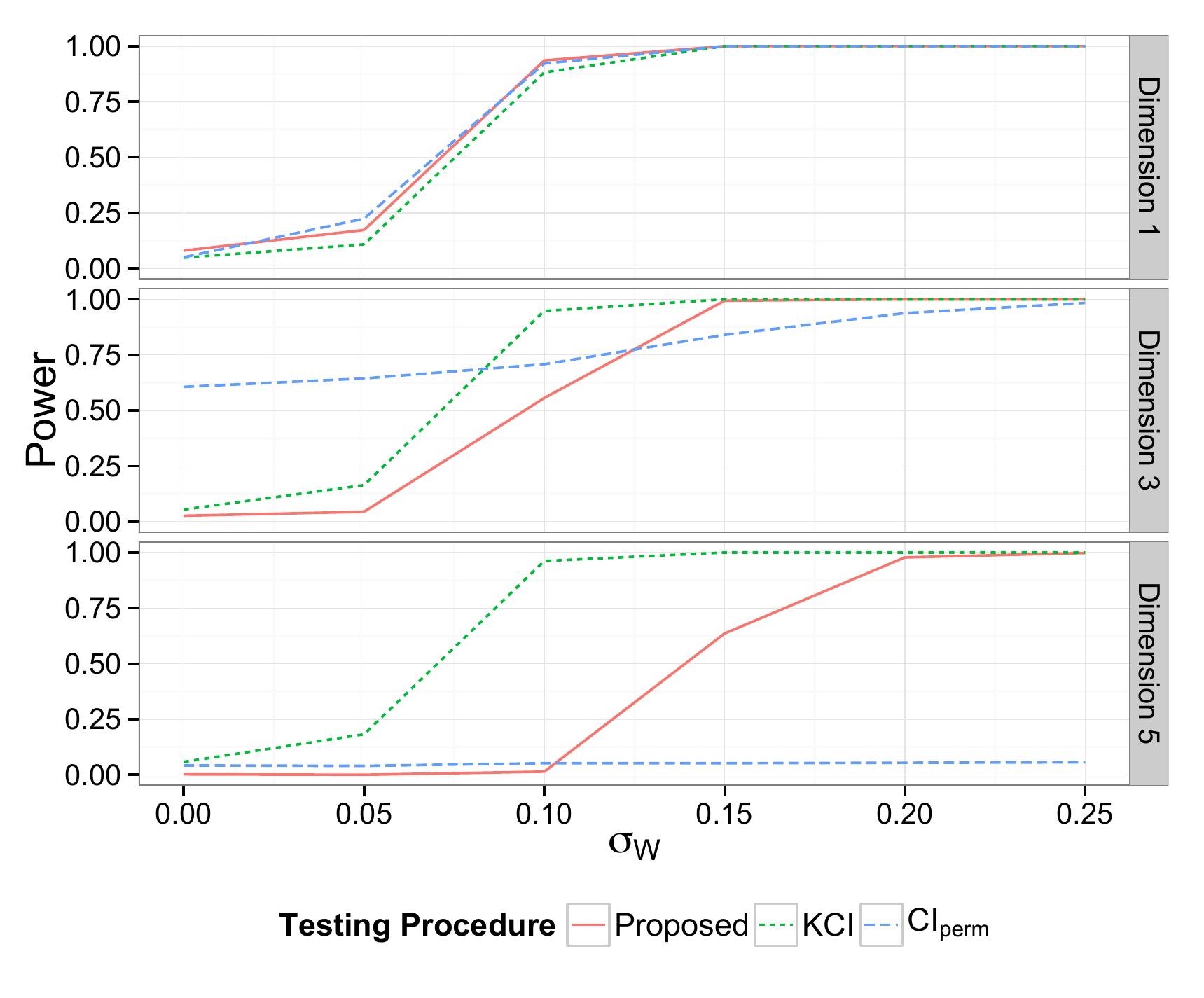}
\caption{The power (at significance level $0.05$) of the three testing procedures for sample size $n=500$ as the dimension $d$ and $\sigma_W$ increase.}
\label{fig:power_curve} 
\end{figure}
%

 The distribution of the $KCI$ test  statistic  under the null hypothesis of conditional independence is estimated with  a Monte Carlo procedure suggested in \cite{Z12}. To implement the $CI_{perm}$ and the $KCI$ testing procedures, we have used the MATLAB source codes provided in \cite{Z12}; the source code can be found at \url{http://people.tuebingen.mpg.de/kzhang/KCI-test.zip}. The R  language codes used to implement our procedure are available at \url{http://stat.columbia.edu/~rohit/research.html}.


Observe that for $CI_{perm}$, the probability of  type I error  is much greater than the significance level for  $d=3$. Furthermore, for $d=5$, it fails to detect the alternative for all values of $\sigma_W$. The performance of $CI_{perm}$ is sensitive to the dimension of the conditioning variable. The probability of type I error for both the proposed and the $KCI$ testing procedures are around the specified significance level. Moreover, the powers of $KCI$ and the proposed test increase to $1$ as $\sigma_W$ increases. Overall, we think that for this simulation scenario the $KCI$ method has the best performance.


\section{Discussion}\label{sec:Disc}
Given a random vector $(X, \bZ)$ in $\Real  \times \Real^d = \Real^{d+1}$ we have defined the notion of a nonparametric residual of $X$ on $\bZ$ as $F_{X|\bZ}(X|\bZ)$, which is always independent of the response $\bZ$. We have studied the properties of the nonparametric residual and showed that it indeed reduces to the usual residual in a multivariate normal regression model. However,  nonparametric estimation of $F_{X|\bZ}(\cdot|\cdot)$ requires smoothing techniques, and hence suffers from the curse of dimensionality. A natural way of mitigating this curse of dimensionality could be to use dimension reduction techniques in estimating the residual $F_{X|\bZ}(X|\bZ)$.  Another alternative would be to use a parametric model for the conditional distribution function.

Suppose now that $(X,Y,\bZ)$ has a joint density on $\Real \times \Real \times \Real^d = \Real^{d+2}$. We have used this notion of residual to show that the conditional independence between $X$ and $Y$, given $\bZ$, is equivalent to the mutual independence of the residuals $F_{X|\bZ}(X|\bZ)$ and $F_{Y|\bZ}(Y|\bZ)$ and the predictor $\bZ$. We have used this result to propose a test for conditional independence, based on the energy statistic. 

We can also use these residuals to come up with a nonparametric notion of partial correlation.  The partial correlation of $X$ and $Y$ measures the degree of association between $X$ and $Y$, removing the effect of $\bZ$. In the nonparametric setting, this reduces to measuring the dependence between the residuals $F_{X|\bZ}(X|\bZ)$ and $F_{Y|\bZ}(Y|\bZ)$. We can use distance covariance (\cite{SzekelyRizzoBakirov07}), or any other measure of dependence, for this purpose. We can also test for zero partial correlation by testing for the independence of the residuals $F_{X|\bZ}(X|\bZ)$ and $F_{Y|\bZ}(Y|\bZ)$. \newline

\noindent {\bf Acknowledgements:} The second author would like to thank Arnab Sen for many helpful discussions, and for his help in writing parts of the paper. He would also like to thank Probal Chaudhuri for motivating the problem. The research of second and third authors is supported by National Science Foundation.
\bibliographystyle{elsarticle-harv}
\bibliography{cond_ind}

\end{document}